\theoremstyle{plain}
\newtheorem{thm}{\protect\theoremname}
\providecommand{\theoremname}{Theorem}
\def\dbar{{\mathchar'26\mkern-12mu d}}
\begin{document}

\title{The Generalized Boltzmann Distribution is the Only Distribution in Which the Gibbs-Shannon Entropy Equals the Thermodynamic Entropy}

\author{Xiang Gao}
\affiliation{%
 Department of Chemistry, University of Florida{},
 Gainesville, Florida 32611
}%

\author{Emilio Gallicchio}
\affiliation{%
 Department of Chemistry, Brooklyn College of the City University of New York, and Doctoral Programs in Chemistry and Biochemistry, the Graduate Center of the City University of New York,
 Brooklyn, New York 11210
}%
\author{Adrian E. Roitberg}%
 \email{roitberg@ufl.edu}
\affiliation{%
 Department of Chemistry, University of Florida,
 Gainesville, Florida 32611
}%

\date{\today}

\begin{abstract}
We show that the generalized Boltzmann distribution is the only distribution for which the Gibbs-Shannon entropy equals the thermodynamic entropy. This result means that the thermodynamic entropy and the Gibbs-Shannon entropy are not generally equal, but rather than the equality holds only in the special case where a system is in equilibrium with a reservoir.
\end{abstract}

\maketitle


\section{Introduction}

There are two well known ways to derive the Boltzmann distribution:
the micro-canonical derivation, and the maximum entropy principle derivation.
Both are found in textbooks, such as \cite{kardar2007statistical, callen1998thermodynamics, chandler1987introduction, landau2013course, balescu1975equilibrium} for the micro-canonical derivation and \cite{chandler1988introduction, callen1998thermodynamics, tolman1979principles} for the maximum entropy principle derivation. These two derivations could be naturally extended to derive the generalized Boltzmann distributions for other ensembles such as grand canonical or  isothermal - isobaric ensemble. Beyond these two standard textbook derivations, the Boltzmann distribution can also be derived based on quantum dynamics\cite{tasaki1998quantum}.

Although modern statistical thermodynamics dates back to as early as Boltzmann\cite{cercignani1998ludwig} and Gibbs\cite{gibbs2014elementary}, new insights are still being obtained, such as the Jarzynski equality\cite{jarzynski1997nonequilibrium} and its quantum extension\cite{mukamel2003quantum}, as well as fluctuation theorems\cite{crooks1999entropy, evans1993probability, seifert2005entropy, esposito2010three}. There is also interest in entropy and its relationship with information, such as \cite{cerf1997negative, lieb2013entropy, parrondo2015thermodynamics}. In this paper, we study the basic question of the relationship between the generalized Boltzmann distribution, the thermodynamic entropy, and the Gibbs-Shannon entropy.

For context, we start by revisiting the two textbook derivations of the Boltzmann distribution.

\subsection{The micro-canonical derivation\label{subsec:micro}}

The micro-canonical derivation constructs an ensemble
with fixed total energy $E$ composed of the system of interest and
a reservoir. The fundamental postulate of statistical mechanics
states that the probability distribution of allowed micro-states is uniform:
\begin{equation}
p_k =\begin{cases}
\frac{1}{\Omega_E} & E_k=E\\
0 & \text{otherwise}
\end{cases}
\end{equation}
where $E_k$ is the energy of micro-state $k$ and
$\Omega_E$ is the number of micro-states with energy $E$.
The interaction between the system and reservoir is assumed to be
weak, in the sense that for a fixed micro-state $i$ of the system
with energy $E_i$, the reservoir is a micro-canonical ensemble
with energy $E_j=E-E_i$. The assumption of weak interaction also implies that is possible to enumerate the states of the system and reservoir independently. The probability of
micro-state $i$ of the system of interest is therefore
obtained by marginalizing over the allowed states $j$ of the reservoir
\begin{equation}
p_i =\sum_{j} \frac{1}{\Omega_E} = \frac{\Omega_{\text{r}(E - E_i)}}{\Omega_E}\propto\Omega_{\text{r}(E - E_i)} \label{eq:micrpi}
\end{equation}
where $\Omega_{\text{r}(E - E_i)}$ is the number of states of the reservoir
that satisfy $E_{j}=E-E_i$, which can be written as
\begin{equation}
\Omega_{\text{r}(E - E_i)}=\exp\left[S_{r(E - E_i)}/k_{B}\right]\label{eq:Omegar-Sr}
\end{equation}
using the definition of entropy for the micro-canonical ensemble. Since the energy of the system $E_s$ is a small fraction of the total energy, we can expand $S_r$ around $E$ as a power series of the system energy $E_s$ as
\begin{multline}
S_{r(E - E_i)} \simeq S_{r(E)}-\left(\frac{\partial S_{r}}{\partial E_{r}}\right)E_{i} \simeq S_{r}(E) -\frac{E_{i}}{T}\label{eq:Sr-expand}
\end{multline}
where $E_r = E - E_i$ is the energy of the reservoir and $T$ is the temperature of the reservoir. This equation becomes exact in the limit of an infinite reservoir at constant temperature T \cite{callen1998thermodynamics}.
Combining \eqref{micrpi}, \eqref{Omegar-Sr} and \eqref{Sr-expand}, we get $\Omega_{\text{r}(E - E_i)}\propto\exp\left(-\frac{E_i}{k_{B}T}\right)$,
or, equivalently, $p_i \propto\exp\left(-\frac{E_i}{k_{B}T}\right)$, which is the Boltzmann distribution.

It is worth mentioning that the assumption of weak interaction between the bath and the system can be relaxed by canonical typicality\cite{goldstein2006canonical, reimann2007typicality}.

\subsection{The maximum entropy principle derivation}

The maximum entropy principle\cite{jaynes1957information,jaynes1957information2}
derives the Boltzmann distribution by maximizing the Gibbs-Shannon
entropy\cite{cover2012elements} $H=-\sum_{i}p_{i}\log p_{i}$ under the constraints of $\left\langle E\right\rangle =\sum_{i}p_{i}E_{i}$
being a constant $E_0$ and of $\sum_{i}p_{i}=1$. The derivation is done using the
Lagrangian multiplier method which maximizes the target function
\begin{equation}
\mathcal{L}=-\sum_{i}p_{i}\log p_{i}-\beta\left(\sum_{i}p_{i}E_{i}-E_{0}\right)+\alpha\left(\sum_{i}p_{i}-1\right)
\end{equation}
 where $\alpha$ and $\beta$ are both Lagrangian multipliers. Zeroing
the derivative of $\mathcal{L}$ with respect to $p_{i}$ gives
\begin{equation}
0=\frac{\delta\mathcal{L}}{\delta p_{i}}=-1-\log p_{i}-\beta E_{i}+\alpha\Rightarrow p_{i}=C\cdot e^{-\beta E_{i}}
\end{equation}
This optimization process has been shown to be equivalent to the micro-canonical
derivation as shown in subsection \ref{subsec:micro} by applying the maximum entropy principle to the whole isolated
system containing the system of interest and reservoir, in a two step fashion\cite{PhysRevE.86.041126}.

\subsection{Connecting the canonical ensemble with thermodynamics}

After obtaining the probability distribution of the ensemble, we need to establish the connection between thermodynamic variables and ensemble quantities. For the canonical ensemble, the thermodynamic temperature $T$, volume $V$, and particle number $N$ are naturally mapped to the $T, V, N$ parameter of the ensemble.
Since the energy of the ensemble is not a parameter but instead a random variable, the mapping of the thermodynamic internal energy $U$ is not as obvious as it is for $T, V, N$.

One then introduce a new postulate that equates the thermodynamic energy $U$ with the ensemble average of the random variable $E$ of system energy:
\begin{multline}
U=\left\langle E\right\rangle=\sum_i p_i E_i = \frac{\sum_i E_i e^{-\beta E_i}}{\sum_i e^{-\beta E_i}}=-\frac{\partial \log Z}{\partial\beta} \label{eq:u-is-avg-e}
\end{multline}
where $Z=\sum_i e^{-\beta E_i}$ is the partition function and $\beta=\frac{1}{k_B T}$.
Since $d\left(\beta F\right)=U d\beta-\beta p dV$, i.e. $\beta F$ has natural variables $\beta$ and $V$, and $U=\frac{\partial \left(\beta F\right) }{\partial \beta}$. 
Comparing this relation with \eqref{u-is-avg-e} immediately gives the equation for Helmholtz free energy $F=-k_B T \log Z$. From the definition of Helmholtz free energy $F=U-TS$, we immediately obtain that the thermodynamic entropy can then be computed as
\begin{equation}
S=\frac{U-F}{T}=-k_B \sum_i p_i \log p_i = k_B H
\end{equation}
where $H=-\sum_i p_i \log p_i$ is the Gibbs-Shannon entropy.

\subsection{Our contribution}

The above logic shows that
\[
\left.\begin{array}{c}
\left(\begin{array}{c}
\text{Thermodynamic}\\
\text{first law}
\end{array}\right)\\
+\\
\left(\begin{array}{c}
\text{The Boltzmann}\\
\text{distribution}
\end{array}\right)
\end{array}\right\} \Rightarrow\left(\begin{array}{c}
\text{Thermodynamic entropy}\\
\text{equals\footnote{Here ``equals'' means differs by a coefficient $k_B$}}\\
\text{Gibbs-Shannon entropy}
\end{array}\right)
\]
while we will now prove the opposite direction:
\[
\left.\begin{array}{c}
\left(\begin{array}{c}
\text{Thermodynamic}\\
\text{first law}
\end{array}\right)\\
+\\
\left(\begin{array}{c}
\text{Thermodynamic entropy}\\
\text{equals}\\
\text{Gibbs-Shannon entropy}
\end{array}\right)
\end{array}\right\} \Rightarrow\left(\begin{array}{c}
\text{The Boltzmann}\\
\text{distribution}
\end{array}\right)
\]
The contribution of this paper is therefore two fold: (1) shows that the generalized Boltzmann distribution is the only distribution in which the Gibbs-Shannon entropy equals the thermodynamic entropy, and (2) presents a new way of deriving the generalized Boltzmann distribution. 

\section{Main result}

We consider a thermodynamic system with $m+n$ generalized forces and coordinates\cite{kubo1968thermodynamics}. The thermodynamic first law can be written as
\begin{equation}
dU=TdS+\sum_{\eta=1}^{n}X_{\eta}d\chi_{\eta}+\sum_{\eta=1}^{m}Y_{\eta}dy_{\eta}\label{eq:first-law}
\end{equation}
We use $U, \chi_1, \ldots \chi_n$ to denote thermodynamic quantities and $E, x_1, \ldots, x_n$ to denote random variables.
We also intentionally choose different letters for generalized forces ($X_i$ and $Y_i$) and displacements ($\chi_i$ and $y_i$) to emphasize that the ensemble we are going to study is parameterized by $T$,  generalized forces $X_1,\ldots,X_n$, and generalized displacements $y_1\ldots,y_m$; that is, $E, x_1\ldots x_n$ are random variables, while $T, X_1, \ldots, X_n, y_1, \ldots, y_n$ are parameters.
In other word, we are studying a $(T, X_1, \ldots, X_n, y_1, \ldots, y_n)-$ ensemble.

The above notation is a general framework for all thermodynamic systems. For example, for an $(\mu, V, T)-$ ensemble (one component grand canonical ensemble), we have $m=n=1$, $y_1=V$, $Y_1=-p$, $\chi_1=N$, $X_1=\mu$.

The main result we present in this paper is the following theorem:
\begin{thm}
Consider thermodynamic systems whose first law reads like \eqref{first-law}.
Any $\left(T,X_{1},\ldots,X_{n},y_{1},\ldots,y_{m}\right)$ ensemble
that has:
\begin{enumerate}
\item Probability density proportional to some function of the ensemble parameters $T,X_{1},\ldots,X_{n},y_{1},\ldots,y_{m}$ and random variables $E,x_{1},x_{2},\cdots,x_{n}$.
\item $\left\langle E\right\rangle =U$, $\left\langle x_{\eta}\right\rangle =\chi_{\eta}$
for all $\eta=1,\ldots,n$
\item The thermodynamic entropy $S$ and the Gibbs-Shannon entropy $H$ have a relationship
$S=k_{B}H$
\end{enumerate}
the probability density function of micro-state $\vec{\omega}$ must have the form
\begin{equation}
Pr\left(\vec{\omega}\right)\propto\exp\left[\sum_{\eta=1}^{n}\frac{X_{\eta}x_{\eta}^{\left(\vec{\omega}\right)}}{k_{B}T}-\frac{E^{\left(\vec{\omega}\right)}}{k_{B}T}\right]
\end{equation}
where $E^{\left(\vec{\omega}\right)},x_{1}^{\left(\vec{\omega}\right)},\ldots,x_{n}^{\left(\vec{\omega}\right)}$
are the value of random variables for the state $\vec{\omega}$.
\end{thm}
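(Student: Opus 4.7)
The plan is to convert the first law together with assumptions (2) and (3) into a single differential identity, and then exploit the strict concavity of the Gibbs--Shannon entropy via a Lagrange-multiplier argument.

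First I would substitute $U=\sum_i p_iE_i$, $\chi_\eta=\sum_i p_ix_{\eta,i}$, and $S=k_BH=-k_B\sum_i p_i\log p_i$ into the first law $dU=T\,dS+\sum_\eta X_\eta d\chi_\eta+\sum_\eta Y_\eta dy_\eta$, use $\sum_i dp_i=0$ to write $dH=-\sum_i dp_i\log p_i$, and rearrange to obtain
\[
\sum_i dp_i\,\Phi_i \;=\; \sum_i p_i\,dE_i - \sum_\eta X_\eta\sum_i p_i\,dx_{\eta,i} - \sum_\eta Y_\eta\,dy_\eta,
\]
where $\Phi_i := -k_BT\log p_i - E_i + \sum_\eta X_\eta x_{\eta,i}$. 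By assumption (1), $\log p_i$ and therefore $\Phi_i$ depend on the micro-state $i$ only through $(E_i,x_{1,i},\ldots,x_{n,i})$, and the conclusion of the theorem is precisely the statement that $\Phi_i$ is independent of $i$ at each $(T,X,y)$.

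I would then close the argument via the maximum-entropy approach. Strict concavity of $H[q]=-\sum q_i\log q_i$ together with the affine constraints from assumption (2) (namely $\sum q_i=1$, $\sum q_iE_i=U$, $\sum q_ix_{\eta,i}=\chi_\eta$) yields a unique maximizer $q^\ast_i\propto\exp(-\alpha E_i-\sum_\eta\beta_\eta x_{\eta,i})$; by the envelope theorem the maximum value $H^\ast(U,\chi,y)$ satisfies $\partial H^\ast/\partial U|_{\chi,y}=\alpha$ and $\partial H^\ast/\partial\chi_\eta|_{U,y}=\beta_\eta$. The first law supplies $\partial S/\partial U|_{\chi,y}=1/T$ and $\partial S/\partial\chi_\eta|_{U,y}=-X_\eta/T$, and assumption (3) turns these into $\partial H[p]/\partial U=1/(k_BT)$ and $\partial H[p]/\partial\chi_\eta=-X_\eta/(k_BT)$. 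Matching the two sets of gradients forces $\alpha=1/(k_BT)$ and $\beta_\eta=-X_\eta/(k_BT)$, so $q^\ast$ is exactly the candidate generalized Boltzmann distribution; moreover $H[p]-H^\ast$ has vanishing $U$- and $\chi_\eta$-derivatives, and combining this with the variational inequality $H[p]\le H^\ast$ plus a reference point where equality trivially holds (for instance a high-temperature limit in which both distributions collapse to the uniform one) yields $H[p]=H^\ast$ globally, whence strict concavity forces $p=q^\ast$.

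The main obstacle is the gradient-matching-then-constant step: matching partial derivatives in $(U,\chi)$ only determines $H[p]-H^\ast$ up to a function of $y$, and pinning this residual down requires a reference state together with an integrability argument along the $y$-directions that uses the differential identity from step one. Assumption (1) is used throughout to guarantee that the max-entropy family $q^\ast$ actually lies inside the class of distributions to which the uniqueness clause of the variational principle applies.
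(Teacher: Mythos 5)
Your first step---substituting $U=\sum_i p_iE_i$, $\chi_\eta=\sum_i p_i x_{\eta,i}$ and $S=k_BH$ into the first law and isolating $\sum_i dp_i\,\Phi_i$ with $\Phi_i=-k_BT\log p_i-E_i+\sum_\eta X_\eta x_{\eta,i}$---is sound and is essentially the computation the paper performs. (The paper specializes to a perturbation of $T$ alone, so that $dE_i=dx_{\eta,i}=dy_\eta=0$ and the identity reduces to $\sum_i \frac{\partial p_i}{\partial T}\,\Phi_i=0$; it then closes the argument by demanding that this hold for \emph{arbitrary} systems sharing the \emph{same} functional form $f$, which is what forces $\Phi_i$ to vanish identically.) You are also right that the theorem is equivalent to $\Phi_i$ being independent of $i$.

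The gap is in your closing variational argument, specifically the ``matching the two sets of gradients'' step. The envelope theorem gives $\partial H^*/\partial U|_{\chi,y}=\alpha$, where $\alpha$ is the Lagrange multiplier of the \emph{maximum-entropy} problem at the constraint values $(U,\chi,y)$; the first law plus assumption (3) gives $\partial H[p]/\partial U|_{\chi,y}=1/(k_BT)$, where $p$ is the \emph{actual} ensemble distribution. These are derivatives of two different functions, $H^*$ and $H[p]$, and nothing forces them to coincide: they coincide precisely when $H[p]-H^*$ is locally constant in $(U,\chi)$, which is essentially the conclusion you are trying to reach. Asserting $\alpha=1/(k_BT)$ is therefore circular. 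The reference-point step suffers a related problem: assumption (1) alone does not make $p$ collapse to the uniform (maximum-entropy) distribution in any high-temperature limit---an $f$ that does not depend on $T$ at all, such as $f=e^{-E^2}$, is admissible under (1) and never collapses---so a point where $H[p]=H^*$ ``trivially holds'' is not available without an argument that again uses (2) and (3) in some unspecified way. The variational inequality $H[p]\le H^*$ can never be upgraded to equality on its own; to repair the proof you must extract $\Phi_i=\mathrm{const}$ directly from the differential identity of your first step, which is exactly the point at which the paper invokes the universality of the functional form $f$ across systems.
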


\begin{proof}
Let's begin by assuming the probability density of micro-state $\vec{\omega}$
is expressed as
\begin{equation}
\Pr\left(\vec{\omega}\right)\propto f\left(E^{\left(\vec{\omega}\right)},x_{1}^{\left(\vec{\omega}\right)},\ldots,x_{n}^{\left(\vec{\omega}\right)};T,X_{1},\ldots,y_{1},\ldots\right)\label{eq:propto}
\end{equation}
the normalization constant is

\begin{equation}
Z=\sum_{\vec{\omega}}f_{\vec{\omega}}
\end{equation}
where the sum is over all microscopic states and $f_{\vec{\omega}}$
is short for $f\left(E^{\left(\vec{\omega}\right)},x_{1}^{\left(\vec{\omega}\right)},\ldots,x_{n}^{\left(\vec{\omega}\right)};T,X_{1},\ldots,y_{1},\ldots\right)$. In the above formula,
the temperature $T$ appears as a parameter, and micro-states $\vec{\omega}$
and its macro-states $E^{\left(\vec{\omega}\right)},x_{1}^{\left(\vec{\omega}\right)},\ldots,x_{n}^{\left(\vec{\omega}\right)}$
do not depend on $T$. Note that if $S=k_B H$, then the thermodynamic entropy
\begin{multline}
S=-k_{B}\sum_{\vec{\omega}}\frac{f_{\vec{\omega}}}{Z}\log\left(\frac{f_{\vec{\omega}}}{Z}\right) \\
=-k_{B}\left\{ \frac{\sum_{\vec{\omega}}f_{\vec{\omega}}\log f_{\vec{\omega}}}{Z}-\log Z\right\} 
\end{multline}
If we perturb the temperature $T$ by $dT$ and keep the other parameters ($X_{1},\ldots,X_{n},y_{1},\ldots,y_{m}$) fixed,
then the change of entropy
\begin{equation}
dS=\left(\frac{\partial S}{\partial T}\right)_{X_{1},\ldots,X_{n},y_{1},\ldots,y_{m}}dT\label{eq:dSdT}
\end{equation}
where
\begin{multline}
\left(\frac{\partial S}{\partial T}\right)_{X_{1},\ldots,X_{n},y_{1},\ldots,y_{m}} \\
=-k_{B}\left\{ \sum_{\vec{\omega}} \frac{Z\cdot\frac{\partial f_{\vec{\omega}}}{\partial T}\cdot\log f_{\vec{\omega}}-\frac{\partial Z}{\partial T}\cdot f_{\vec{\omega}}\log f_{\vec{\omega}}}{Z^{2}}\right\} \\
=-k_{B}\sum_{\vec{\omega}}\frac{\partial}{\partial T}\left(\frac{f_{\vec{\omega}}}{Z}\right)\cdot\log\left(f_{\vec{\omega}}\right)\label{eq:pSpT}
\end{multline}
Also
\begin{equation}
\left\langle E\right\rangle =\sum_{\vec{\omega}}\frac{f_{\vec{\omega}}}{Z}\cdot E^{\left(\vec{\omega}\right)}
\end{equation}
and therefore
\begin{multline}
d\left\langle E\right\rangle =\left(\frac{\partial\left\langle E\right\rangle }{\partial T}\right)_{X_{1},\ldots,X_{n},y_{1},\ldots,y_{m}}dT \\
=\sum_{\vec{\omega}}\frac{\partial}{\partial T}\left(\frac{f_{\vec{\omega}}}{Z}\right)\cdot E^{\left(\vec{\omega}\right)}\cdot dT\label{eq:dEdT}
\end{multline}
The same argument applies for all $\left\langle x_{\eta}\right\rangle $,
and we therefore have
\begin{multline}
d\left\langle x_{\eta}\right\rangle =\left(\frac{\partial\left\langle x_{\eta}\right\rangle }{\partial T}\right)_{X_{1},\ldots,X_{n},y_{1},\ldots,y_{m}}dT \\
=\sum_{\vec{\omega}}\frac{\partial}{\partial T}\left(\frac{f_{\vec{\omega}}}{Z}\right)\cdot x_{\eta}^{\left(\vec{\omega}\right)}\cdot dT\label{eq:dxdT}
\end{multline}

By substituting equations in condition 2 of the theorem into \eqref{first-law}, the first law of thermodynamics can be rewritten as
\begin{equation}
-\frac{dS}{k_{B}}+\frac{d\left\langle E\right\rangle }{k_{B}T}-\frac{1}{k_{B}T}\sum_{\eta=1}^{n}X_{\eta}d\left\langle x_{\eta}\right\rangle -\frac{1}{k_{B}T}\sum_{\eta=1}^{m}Y_{\eta}dy_{\eta}=0
\end{equation}
substituting \eqref{dSdT}, \eqref{pSpT}, \eqref{dEdT}, \eqref{dxdT}
and $dy_{\eta}=0$\footnote{Because we are perturbing $T$ while keeping $y_\eta$ constant}, we have
\begin{equation}
\sum_{\vec{\omega}}\frac{\partial}{\partial T}\left(\frac{f_{\vec{\omega}}}{Z}\right)\cdot\left[\log\left(f_{\vec{\omega}}\right)+\frac{E^{\left(\vec{\omega}\right)}}{k_{B}T}-\sum_{\eta=1}^{n}\frac{X_{\eta}x_{\eta}^{\left(\vec{\omega}\right)}}{k_{B}T}\right]dT=0
\end{equation}

Physics laws should be universal, i.e. the above equation must hold
for arbitrary system, the only way for this to happen is
\begin{equation}
\log\left(f_{\vec{\omega}}\right)+\frac{E^{\left(\vec{\omega}\right)}}{k_{B}T}-\sum_{\eta=1}^{n}\frac{X_{\eta}x_{\eta}^{\left(\vec{\omega}\right)}}{k_{B}T}=0
\end{equation}
that is
\begin{equation}
f_{\vec{\omega}}=\exp\left[\sum_{\eta=1}^{n}\frac{X_{\eta}x_{\eta}^{\left(\vec{\omega}\right)}}{k_{B}T}-\frac{E^{\left(\vec{\omega}\right)}}{k_{B}T}\right]
\end{equation}
We therefore obtained the generalized Boltzmann distribution.
\end{proof}

\section{Examples}

The formulation of our main result is quite general. In this section, we will show how commonly used ensembles fit into our general framework.

The single component canonical ensemble, i.e. the $(N, V, T)$ ensemble, has $y_1=N$, $y_2=V$, $Y_1=\mu$, $Y_2=-p$. There are no $\chi$, $X$ or $x$. The first law reads $dU=TdS-pdV+\mu dN$ and the distribution is $\Pr\left(\vec{\omega}\right)\propto\exp{\left(-\frac{E^{\left(\vec{\omega}\right)}}{k_{B}T}\right)}$.

The two component grand canonical ensemble, i.e. the $(\mu_1, \mu_2, V, T)$ ensemble, has $y_1=V$, $Y_1=-p$, $\chi_1=N_1$, $\chi_2=N_2$, $X_1=\mu_1$, $X_2=\mu_2$, $x_1^{\left(\vec{\omega}\right)}=N_1^{\left(\vec{\omega}\right)}$, and $x_2^{\left(\vec{\omega}\right)}=N_2^{\left(\vec{\omega}\right)}$. The first law reads $dU=TdS-pdV+\mu_1 dN_1+\mu_2 dN_2$ and the distribution is $\Pr\left(\vec{\omega}\right)\propto\exp{\left(\frac{\mu_1 N_1^{\left(\vec{\omega}\right)}+\mu_2 N_2^{\left(\vec{\omega}\right)}-E^{\left(\vec{\omega}\right)}}{k_{B}T}\right)}$.

The single component isothermal--isobaric ensemble, i.e. the $(N, p, T)$ ensemble, has $y_1=N$, $Y_1=\mu$, $\chi_1=V$, $X_1=-p$, and $x_1^{\left(\vec{\omega}\right)}=V^{\left(\vec{\omega}\right)}$. The first law reads $dU=TdS-pdV+\mu dN$ and the distribution is
$\Pr\left(\vec{\omega}\right)\propto\exp{\left(-\frac{pV^{\left(\vec{\omega}\right)}+E^{\left(\vec{\omega}\right)}}{k_{B}T}\right)}$.

\section{Discussion}

We showed that the ensemble for a non-isolated thermodynamic system at equilibrium described by temperature, parameters $y_i$ and expectation values $\left\langle x_i\right\rangle$, and whose thermodynamic entropy is given by the Gibbs-Shannon formula, is necessarily described by the generalized Boltzmann's distribution. It follows that the generalized Boltzmann distribution is the only distribution for which the Gibbs-Shannon entropy equals the thermodynamic entropy.

Unlike familiar thermodynamic quantities such as energy, volume, and pressure, the concept of thermodynamic entropy is harder to grasp. The entropy is nevertheless a well-defined state function of systems at equilibrium, stemming from the fact that $\oint \dbar Q/T = 0$ for any closed thermodynamic path along equilibrium states. Conversely, the thermodynamic entropy is defined only for states at equilibrium\cite{lieb2013entropy}. So, while the question of what is the form of the entropy for states out of equilibrium is fundamentally ill-posed, it is tempting to try to extend the concept of thermodynamic entropy to stable non-equilibrium systems\cite{parrondo2015thermodynamics}, such as systems at steady state. Our result implies that the entropy of such systems cannot be described by the Gibbs-Shannon formula unless their distribution of states follows the generalized Boltzmann's distribution.

Boltzmann famously defined the entropy of an isolated system at energy $E$ as $S(E)=-k_B \log {p(E)}$ \cite{cercignani1998ludwig}, $S(E)$ is the micro-canonical entropy and where $p(E)=1/\Omega$ is the probability of occupancy of any of the degenerate micro-states of the system. The identification of the thermodynamic entropy of a system in equilibrium with a reservoir with the ensemble average of the micro-canonical entropy ($S = -k_B \left\langle \log p \right\rangle = -k_B \sum_i p_i \log{p_i}$), as in the Gibbs-Shannon entropy formula\cite{gibbs2014elementary}, while known to be consistent with the Boltzmann's distribution, appears arbitrary. For example, it is worth considering whether there could be another distribution whose Gibbs-Shannon entropy can be identified with the thermodynamic entropy. Our result shows that within the very general thermodynamic assumptions (\eqref{first-law, propto}) this is not possible.

It is also worth mention that, for the special case of $(N,V,T)$ ensemble, instead of starting from the thermodynamic first law in \eqref{first-law}, similar results could also be obtained\cite{yuhan2019personal} starting from the expression of heat of quantum thermodynamics\cite{ma2017quantum, quan2005quantum, su2018heat, quan2007quantum}, i.e. $\dbar Q=\sum_i E_i dp_i$: Under the condition $S=k_B H$:
\begin{equation}
\sum_{i}E_{i}dp_{i}=\dbar Q=TdS=-k_{B}T\sum_{i}\log p_{i}dp_{i}
\end{equation}
which immediately gives $p_i\propto \exp{\left(-\frac{E_i}{k_B T}\right)}$.

\section{Acknowledgement}

The authors would like to thank Vin\'{i}cius Wilian D. Cruzeiro for proofreading of this paper, suggestions on the organization of this paper, and technical discussion on the condition of our main results. Discussions with Daniel Zuckerman, John Chodera, and Mike Gilson, have enriched this manuscript.

The authors would also like to thank Yuhan Ma for the discussion on the relationship of the main result of this paper with thermal equilibrium, as well as the connection of the proof of the main result with respect to quantum thermodynamics.

\bibliographystyle{unsrtnat}
\bibliography{main}

\begin{thebibliography}{32}
\providecommand{\natexlab}[1]{#1}
\providecommand{\url}[1]{\texttt{#1}}
\expandafter\ifx\csname urlstyle\endcsname\relax
  \providecommand{\doi}[1]{doi: #1}\else
  \providecommand{\doi}{doi: \begingroup \urlstyle{rm}\Url}\fi

\bibitem[Kardar(2007)]{kardar2007statistical}
Mehran Kardar.
\newblock \emph{Statistical physics of particles}.
\newblock Cambridge University Press, 2007.

\bibitem[Callen(1998)]{callen1998thermodynamics}
Herbert~B Callen.
\newblock Thermodynamics and an introduction to thermostatistics, 1998.

\bibitem[Landau and Lifshitz(2013)]{landau2013course}
Lev~Davidovich Landau and Evgenii~Mikhailovich Lifshitz.
\newblock \emph{Course of theoretical physics: statistical physics}.
\newblock Elsevier, 2013.

\bibitem[Balescu(1991)]{balescu1975equilibrium}
R.~Balescu.
\newblock \emph{Equilibrium and Nonequilibrium Statistical Mechanics}.
\newblock Krieger, 1991.
\newblock ISBN 9780894645242.

\bibitem[Chandler(1987)]{chandler1988introduction}
David Chandler.
\newblock \emph{Introduction to modern statistical mechanics}.
\newblock Oxford University Press, 1987.

\bibitem[Tolman(1979)]{tolman1979principles}
Richard~Chace Tolman.
\newblock \emph{The principles of statistical mechanics}.
\newblock Courier Corporation, 1979.

\bibitem[Tasaki(1998)]{tasaki1998quantum}
Hal Tasaki.
\newblock From quantum dynamics to the canonical distribution: general picture
  and a rigorous example.
\newblock \emph{Physical review letters}, 80\penalty0 (7):\penalty0 1373, 1998.

\bibitem[Cercignani(1998)]{cercignani1998ludwig}
Carlo Cercignani.
\newblock \emph{Ludwig Boltzmann: the man who trusted atoms}.
\newblock Oxford University Press, 1998.

\bibitem[Gibbs(2014)]{gibbs2014elementary}
J~Willard Gibbs.
\newblock \emph{Elementary principles in statistical mechanics}.
\newblock Courier Corporation, 2014.

\bibitem[Jarzynski(1997)]{jarzynski1997nonequilibrium}
Christopher Jarzynski.
\newblock Nonequilibrium equality for free energy differences.
\newblock \emph{Physical Review Letters}, 78\penalty0 (14):\penalty0 2690,
  1997.

\bibitem[Mukamel(2003)]{mukamel2003quantum}
Shaul Mukamel.
\newblock Quantum extension of the jarzynski relation: Analogy with stochastic
  dephasing.
\newblock \emph{Physical review letters}, 90\penalty0 (17):\penalty0 170604,
  2003.

\bibitem[Crooks(1999)]{crooks1999entropy}
Gavin~E Crooks.
\newblock Entropy production fluctuation theorem and the nonequilibrium work
  relation for free energy differences.
\newblock \emph{Physical Review E}, 60\penalty0 (3):\penalty0 2721, 1999.

\bibitem[Evans et~al.(1993)Evans, Cohen, and Morriss]{evans1993probability}
Denis~J Evans, Ezechiel~GD Cohen, and Gary~P Morriss.
\newblock Probability of second law violations in shearing steady states.
\newblock \emph{Physical review letters}, 71\penalty0 (15):\penalty0 2401,
  1993.

\bibitem[Seifert(2005)]{seifert2005entropy}
Udo Seifert.
\newblock Entropy production along a stochastic trajectory and an integral
  fluctuation theorem.
\newblock \emph{Physical review letters}, 95\penalty0 (4):\penalty0 040602,
  2005.

\bibitem[Esposito and Van~den Broeck(2010)]{esposito2010three}
Massimiliano Esposito and Christian Van~den Broeck.
\newblock Three detailed fluctuation theorems.
\newblock \emph{Physical review letters}, 104\penalty0 (9):\penalty0 090601,
  2010.

\bibitem[Cerf and Adami(1997)]{cerf1997negative}
Nicolas~J Cerf and Chris Adami.
\newblock Negative entropy and information in quantum mechanics.
\newblock \emph{Physical Review Letters}, 79\penalty0 (26):\penalty0 5194,
  1997.

\bibitem[Lieb and Yngvason(2013)]{lieb2013entropy}
Elliott~H Lieb and Jakob Yngvason.
\newblock The entropy concept for non-equilibrium states.
\newblock \emph{Proceedings of the Royal Society A: Mathematical, Physical and
  Engineering Sciences}, 469\penalty0 (2158):\penalty0 20130408, 2013.

\bibitem[Parrondo et~al.(2015)Parrondo, Horowitz, and
  Sagawa]{parrondo2015thermodynamics}
Juan~MR Parrondo, Jordan~M Horowitz, and Takahiro Sagawa.
\newblock Thermodynamics of information.
\newblock \emph{Nature physics}, 11\penalty0 (2):\penalty0 131, 2015.

\bibitem[Goldstein et~al.(2006)Goldstein, Lebowitz, Tumulka, and
  Zangh{\`\i}]{goldstein2006canonical}
Sheldon Goldstein, Joel~L Lebowitz, Roderich Tumulka, and Nino Zangh{\`\i}.
\newblock Canonical typicality.
\newblock \emph{Physical review letters}, 96\penalty0 (5):\penalty0 050403,
  2006.

\bibitem[Reimann(2007)]{reimann2007typicality}
Peter Reimann.
\newblock Typicality for generalized microcanonical ensembles.
\newblock \emph{Physical review letters}, 99\penalty0 (16):\penalty0 160404,
  2007.

\bibitem[Jaynes(1957{\natexlab{a}})]{jaynes1957information}
Edwin~T Jaynes.
\newblock Information theory and statistical mechanics.
\newblock \emph{Physical review}, 106\penalty0 (4):\penalty0 620,
  1957{\natexlab{a}}.

\bibitem[Jaynes(1957{\natexlab{b}})]{jaynes1957information2}
Edwin~T Jaynes.
\newblock Information theory and statistical mechanics. ii.
\newblock \emph{Physical review}, 108\penalty0 (2):\penalty0 171,
  1957{\natexlab{b}}.

\bibitem[Cover and Thomas(2012)]{cover2012elements}
Thomas~M Cover and Joy~A Thomas.
\newblock \emph{Elements of information theory}.
\newblock John Wiley \& Sons, 2012.

\bibitem[Lee and Press{\'e}(2012)]{PhysRevE.86.041126}
Julian Lee and Steve Press{\'e}.
\newblock Microcanonical origin of the maximum entropy principle for open
  systems.
\newblock \emph{Phys. Rev. E}, 86:\penalty0 041126, Oct 2012.
\newblock \doi{10.1103/PhysRevE.86.041126}.

\bibitem[Note1()]{Note1}
Note1.
\newblock Here ``equals'' means differs by a coefficient $k_B$.

\bibitem[Kubo(1968)]{kubo1968thermodynamics}
R.~Kubo.
\newblock \emph{Thermodynamics: An Advanced Course with Problems and
  Solutions}.
\newblock North-Holland Publishing Company, 1968.

\bibitem[Note2()]{Note2}
Note2.
\newblock Because we are perturbing $T$ while keeping $y_\eta $ constant.

\bibitem[Ma()]{yuhan2019personal}
Yuhan Ma.
\newblock personal communication.

\bibitem[Ma et~al.(2017)Ma, Su, and Sun]{ma2017quantum}
Yu-Han Ma, Shan-He Su, and Chang-Pu Sun.
\newblock Quantum thermodynamic cycle with quantum phase transition.
\newblock \emph{Physical Review E}, 96\penalty0 (2):\penalty0 022143, 2017.

\bibitem[Quan et~al.(2005)Quan, Zhang, and Sun]{quan2005quantum}
HT~Quan, P~Zhang, and CP~Sun.
\newblock Quantum heat engine with multilevel quantum systems.
\newblock \emph{Physical Review E}, 72\penalty0 (5):\penalty0 056110, 2005.

\bibitem[Su et~al.(2018)Su, Chen, Ma, Chen, and Sun]{su2018heat}
Shanhe Su, Jinfu Chen, Yuhan Ma, Jincan Chen, and Changpu Sun.
\newblock The heat and work of quantum thermodynamic processes with quantum
  coherence.
\newblock \emph{Chinese Physics B}, 27\penalty0 (6):\penalty0 060502, 2018.

\bibitem[Quan et~al.(2007)Quan, Liu, Sun, and Nori]{quan2007quantum}
HT~Quan, Yu-xi Liu, CP~Sun, and Franco Nori.
\newblock Quantum thermodynamic cycles and quantum heat engines.
\newblock \emph{Physical Review E}, 76\penalty0 (3):\penalty0 031105, 2007.

\end{thebibliography}

\end{document}